\providecommand{\tabularnewline}{\\}
\theoremstyle{plain}
\newtheorem{thm}{\protect\theoremname}[section]
  \theoremstyle{definition}
  \newtheorem{defn}[thm]{\protect\definitionname}
  \theoremstyle{definition}
  \newtheorem{example}[thm]{\protect\examplename}
  \theoremstyle{plain}
  \newtheorem{lem}[thm]{\protect\lemmaname}
\title{Translating HOL to Dedukti}
\author{
Ali Assaf
\institute{Inria, Paris, France}
\institute{\'Ecole Polytechnique, Palaiseau, France}
\and
Guillaume Burel
\institute{ENSIIE/C\'edric, \'Evry, France}
}
  \providecommand{\definitionname}{Definition}
  \providecommand{\examplename}{Example}
  \providecommand{\lemmaname}{Lemma}
\providecommand{\theoremname}{Theorem}
\begin{document}
\global\long\def\nothing{\rule{0em}{0ex}}
\global\long\def\centerv#1{\adjustbox{valign=m}{\ensuremath{#1}}}

\global\long\def\lambdapi{\lambda\Pi}

\global\long\def\setname#1{\ensuremath{\mathcal{#1}}}
\global\long\def\metaname#1{\ensuremath{\mathit{#1}}}
\global\long\def\rulename#1{\ensuremath{\textsc{#1}}}
\global\long\def\constname#1{\ensuremath{\mathsf{#1}}}
\global\long\def\sortname#1{\mathsf{#1}}

\global\long\def\operator#1{\mathop{#1\nothing}}

\global\long\def\typesort{\sortname{Type}}
\global\long\def\kindsort{\sortname{Kind}}

\global\long\def\const#1{\constname{#1}}

\global\long\def\appterm#1#2{\mathop{#1\nothing}#2}

\global\long\def\lamterm#1#2#3{\lambda#1:#2.\,#3}

\global\long\def\prodterm#1#2#3{\Pi#1:#2.\,#3}

\global\long\def\ulam#1#2{\lambda#1.#2}

\global\long\def\arrow{\rightarrow}
\global\long\def\rewrite{\leadsto}
\global\long\def\steps{\longrightarrow}
\global\long\def\booleq{\Leftrightarrow}

\global\long\def\emptycontext{\cdot}
\global\long\def\rewriterule#1#2#3{\left[#1\right]\,#2\,\leadsto\,#3}

\global\long\def\transterm#1{\left|#1\right|}
\global\long\def\transtype#1{\left\Vert #1\right\Vert }

\global\long\def\wellformed{\ \metaname{well\mbox{-}formed}}
\global\long\def\context{\ \metaname{context}}
\global\long\def\signature{\ \metaname{signature}}
\global\long\def\freevars{\operator{\metaname{FV}}}

\global\long\def\D{\mathcal{D}}

\newcommandx\inferencerule[3][usedefault, addprefix=\global, 1=]{\centerv{\inferrule*[right=#1]{#2\nothing}{#3}}}

\global\long\def\ttypeconst{\const{type}}
\global\long\def\tboolconst{\const{bool}}
\global\long\def\tindconst{\const{ind}}
\global\long\def\tarrowconst{\const{arrow}}

\global\long\def\ttype{\ttypeconst}
\global\long\def\tbool{\tboolconst}
\global\long\def\tind{\indconst}
\global\long\def\tarrow#1#2{\appterm{\appterm{\tarrowconst}{#1}}{#2}}

\global\long\def\ttermconst{\const{term}}
\global\long\def\teqconst{\const{eq}}
\global\long\def\tselectconst{\const{select}}

\global\long\def\tterm#1{\appterm{\ttermconst}{#1}}
\global\long\def\teq#1#2#3{\appterm{\appterm{\appterm{\teqconst}{#1}}{#2}}{#3}}
\global\long\def\tselect#1{\appterm{\tselectconst}{#1}}

\global\long\def\tproofconst{\const{proof}}
\global\long\def\tproof#1{\appterm{\tproofconst}{#1}}

\global\long\def\topspace{\rule[0ex]{0em}{2.5ex}}

\global\long\def\botspace{\rule[-1ex]{0em}{0ex}}

\global\long\def\vbotspace#1{\rule[-#1ex]{0em}{0ex}}

\global\long\def\vtopspace#1{\rule{0em}{#1ex}}

\title{Translating HOL to Dedukti}
\maketitle
\begin{abstract}
Dedukti is a logical framework based on the $\lambdapi$-calculus
modulo rewriting, which extends the $\lambdapi$-calculus with rewrite
rules. In this paper, we show how to translate the proofs of a family
of HOL proof assistants to Dedukti. The translation preserves binding,
typing, and reduction. We implemented this translation in an automated
tool and used it to successfully translate the OpenTheory standard
library.
\end{abstract}

\section{\label{sec:introduction}Introduction}

Dedukti is a logical framework for defining logics and expressing
proofs in those logics \cite{boespflug_lambda-pi-calculus_2012}.
Following the LF legacy \cite{harper_framework_1993}, it is based
on the \emph{$\lambdapi$-calculus modulo rewriting}, which extends
the \emph{$\lambdapi$-calculus} with rewrite rules. Cousineau and
Dowek \cite{cousineau_embedding_2007} showed that functional \emph{pure
type systems} (PTS), a large class of calculi that are at the basis
of many proof systems, can be embedded in the $\lambdapi$-calculus
modulo rewriting in a way that is complete and that preserves reductions
(i.e.~program evaluation). This led to propose Dedukti as a universal
proof framework.

In this paper, we focus on translating the proofs of HOL to Dedukti.
HOL refers to a family of theorem provers built on a common logical
system known as \emph{higher-order logic} or \emph{simple type theory}
\cite{church_formulation_1940}. It includes systems such as HOL Light,
HOL4, and ProofPower-HOL. These systems are fairly popular and a large
number of important mathematical results have been formalized in them
\cite{hales_jordan_2007,hales_revision_2011,wiedijk_qed_2007}.

\subsubsection*{Universal proof checking}

Using Dedukti as a logical framework serves two goals. First, in the
short term, it serves as an alternative, independent proof checker,
providing an additional layer of confidence over each system. The
second, longer term goal, is interoperability. Proof systems are becoming
increasingly important, both in the formalization of mathematics and
in software engineering. However, they are usually developed separately,
with very little interoperability in mind. As a result, it is currently
very difficult to reuse a proof from one system in another one. Embedding
these different systems in a single unified framework is the first
step to bring them closer together, and opens the way for theory management
systems \cite{hurd_opentheory_2011,rabe_scalable_2011} to combine
their proofs in order to construct and verify larger theories.

\subsubsection*{The $\protect\lambdapi$-calculus as a logical framework}

The $\lambdapi$-calculus, also known as LF, is a typed $\lambda$-calculus
with dependent types. Through the \emph{Curry--Howard correspondence},
it can express a wide variety of logics \cite{harper_framework_1993}.
Several formalizations of HOL in LF have been proposed \cite{appel_foundational_2001,schurmann_executable_2006,rabe_representing_2010}.

\pagebreak{}

The main concept behind this correspondence is the ``\emph{propositions
as types}'' principle. Typically, we define a context declaring the
types, terms, and judgments of the original logic, in such a way that
provability in the logic corresponds to type inhabitation in the context.
For HOL, the signature would be:
\begin{flalign*}
 & \begin{array}{lll}
\ttypeconst & : & \typesort\\
\const{bool} & : & \ttype\\
\tarrowconst & : & \ttype\arrow\ttype\arrow\ttype
\end{array}\\
 & \begin{array}{lll}
\ttermconst & : & \ttype\arrow\typesort\\
\const{lam} & : & \left(\tterm{\alpha}\arrow\tterm{\beta}\right)\arrow\tterm{\left(\tarrow{\alpha}{\beta}\right)}\\
\const{app} & : & \tterm{\left(\tarrow{\alpha}{\beta}\right)}\arrow\tterm{\alpha}\arrow\tterm{\beta}
\end{array}\\
 & \begin{array}{lll}
\tproofconst & : & \tterm{\const{bool}}\\
\const{rule\_1} & : & \ldots\\
\const{rule\_2} & : & \ldots
\end{array}
\end{flalign*}
For each proposition $\phi$ of the logic, we assign a type $\transtype{\phi}$
in the $\lambdapi$-calculus. The provability of the proposition $\phi$
corresponds to the inhabitation of the type $\transtype{\phi}$. Similarly,
we translate proofs as terms inhabiting those types, and the correctness
of the proof corresponds to the well-typedness of the term.

However, because the $\lambdapi$-calculus does not have polymorphism,
we cannot translate propositions directly as types, as doing so would
prevent us from quantifying over propositions for example. Instead,
for each proposition $\phi$, we have two translations: one translation
$\transterm{\phi}$ \emph{as a term}, and another $\transtype{\phi}=\tproof{\transterm{\phi}}$
\emph{as a type}. This correspondence has been successfully used to
embed logics in the LF framework \cite{harper_framework_1993,geuvers_logical_1999},
implemented in Twelf \cite{pfenning_system_1999}.

\subsubsection*{The $\protect\lambdapi$-calculus vs. the $\protect\lambdapi$-calculus
modulo rewriting}

An important limitation of LF is that these encodings do not preserve
reduction (i.e.~program evaluation), and therefore it does not preserve
equivalence: if $M\equiv_{\beta}M'$ then $\transterm M\not\equiv_{\beta}\transterm{M'}$.
For example, the term $\appterm{(\lamterm x{\alpha}x)}x$ is encoded
as $\appterm{\appterm{\const{app}}{\left(\appterm{\const{lam}}{\left(\lamterm x{\tterm{\alpha}}x\right)}\right)}}x$
which is not equivalent to $x$. This is problematic not only because
it makes the representation larger and hence less efficient but also
because conversion proofs may be very long.

By extending the $\lambdapi$-calculus with rewrite rules such as
\[
\tterm{\left(\tarrow{\alpha}{\beta}\right)}\rewrite\tterm{\alpha}\arrow\tterm{\beta}\ ,
\]
we can identify the type $\tterm{\left(\tarrow{\alpha}{\beta}\right)}$
with the type $\tterm{\alpha}\arrow\tterm{\beta}$ and thus define
a translation that is lighter and that preserves reductions. The encoding
of the terms becomes more compact, as we represent $\lambda$-abstractions
by $\lambda$-abstractions, applications by applications, etc. For
example, the term $\appterm{(\lamterm x{\alpha}x)}x$ is encoded as
$\appterm{\left(\lamterm x{\tterm{\alpha}}x\right)}x$. Such an encoding
is impossible in LF for higher-order theories such as system~F, HOL,
or the calculus of constructions.

Moreover, our translation is modular enough so that we can extend
the notion of reduction to the proofs of HOL and recover the pure
type system nature of HOL \cite{barendregt_lambda_1992}. This might
be beneficial for several reasons:
\begin{enumerate}
\item It gives a reduction semantics for the proofs of HOL.
\item It allows compressing the proofs further by replacing conversion proofs
with reflexivity.
\item Several other proof systems (Coq, Agda, etc.) are based on pure type
systems, so expressing HOL as a PTS fits in the large scale of interoperability.
\end{enumerate}

\subsubsection*{HOL and OpenTheory}

The theorem provers of the HOL family (HOL Light, HOL4, ProofPower-HOL,
etc.) are built on a common logical formalism known as \emph{higher-order
logic}, and have fairly similar core implementations.

A recurrent issue when trying to retrieve proofs from these systems
is that they do not keep a trace of their proofs \cite{hurd_opentheory_2011,keller_importing_2010,obua_importing_2006}.
Following the LCF architecture, they represent their theorems using
an abstract datatype and thus guarantee their safety without the need
to remember their proofs. This approach reduces memory consumption
but hinders their ability to share proofs.

Fortunately, several proposals have already been made to solve this
problem \cite{hurd_opentheory_2011,obua_importing_2006}. Among them
is the OpenTheory project. It defines a standard format called the
\emph{article format} for recording and sharing HOL theorems. An article
file contains a sequence of elementary commands to reconstruct proofs.
Importing a theorem requires only a mechanical execution of the commands.

The format is limited to the HOL family, and cannot be used to communicate
the proofs of Coq for example. However, it is an excellent starting
point for our translation. Choosing OpenTheory as a front-end has
several advantages:
\begin{itemize}
\item We cover all the systems of the HOL family that can export their proofs
to OpenTheory with a single implementation. As of today, this includes
HOL Light, HOL4, and ProofPower-HOL.\footnote{Isabelle/HOL can currently read from but not write to OpenTheory.}
\item The implementation of a theorem prover can change, so the existence
of this standard, documented proof format is extremely helpful, if
not necessary.
\item The OpenTheory project also defines a large common standard theory
library, covering the development of common datatypes and mathematical
theories such as lists and natural numbers. This substantial body
of theories was used as a benchmark for our implementation.
\end{itemize}

\subsubsection*{Related work}

Several formalizations of HOL in LF have been proposed \cite{appel_foundational_2001,rabe_representing_2010,schurmann_executable_2006}.
To our knowledge, they lack an actual implementation of the translation.
Other translations have been proposed to automatically extract the
proofs of HOL to other systems such as Isabelle/HOL \cite{kaliszyk_scalable_2013,obua_importing_2006},
Nuprl \cite{naumov_hol/nuprl_2001}, or Coq \cite{keller_importing_2010}.
With the exception of the implementation of Kalyszyk and Krauss \cite{kaliszyk_scalable_2013},
these tools suffer from scalability problems. Our translation is lightweight
enough to be scalable and provides promising results. The implementation
of Kalyszyk and Krauss is the first efficient and scalable translation
of HOL Light proofs, but its target is Isabelle/HOL, a system that,
unlike Dedukti, is foundationally very close to HOL Light.

ProofCert \cite{chihani_foundational_2013} is another project like
Dedukti that aims at providing a universal framework for checking
proofs. Unlike Dedukti, it is based on sequent calculus. It can handle
linear, intuitionistic, and classical logics. To our knowledge, there
are no automated translations of systems like HOL to ProofCert that
have been implemented yet.

A project complementary to ours is Coqine \cite{boespflug_coqine:_2012},
which proposes a translation of the \emph{calculus of inductive constructions}
(CIC), the formalism behind Coq, to Dedukti. The translation has been
implemented in an automated tool that translates the proofs compiled
by Coq to Dedukti. It can handle most of the features of Coq, and
has been used to translate a part of its standard library.

\pagebreak{}

\subsubsection*{Contributions}

We define a translation of the types, terms and proofs of HOL to Dedukti.
We use the rewriting techniques of Cousineau and Dowek \cite{cousineau_embedding_2007}
to obtain a shallow embedding that is lightweight and modular. We
implemented this translation in an automated tool called Holide, which
automatically translates the proofs of HOL written in the OpenTheory
format to Dedukti. We used it to successfully translate the OpenTheory
standard library.

\subsubsection*{Outline}

The rest of this paper is organized as follows. Section \ref{sec:dedukti}
presents Dedukti and the $\lambdapi$-calculus modulo rewriting. Section
3 presents HOL and the logical system behind it. Section 4 defines
the translation of HOL to Dedukti. In Section 5, we show that the
translation is correct. Section 6 discusses the details of our implementation
and the results obtained by translating the OpenTheory standard library.
Section 7 discusses some additional applications of rewriting. Finally,
Section 8 summarizes and considers future work.

\section{\label{sec:dedukti}Dedukti}

Dedukti is essentially a type checker for the $\lambdapi$-calculus
modulo rewriting \cite{boespflug_lambda-pi-calculus_2012}, which
extends the $\lambdapi$-calculus with rewrite rules. We choose a
presentation based on pure type systems \cite{barendregt_lambda_1992},
which makes no syntactic distinction between terms, usually denoted
by $M$ or $N$, and types, usually denoted by $A$ or $B$.

We assume countably infinite sets of variables and constants. There
are two sorts, $\typesort$ and $\kindsort$. The sort $\typesort$
is the type of types and the sort $\kindsort$ is the type of $\typesort$.
We write $\lamterm xAM$ for abstractions and $\appterm MN$ for applications.
The type of functions is written $\prodterm xAB$, or $A\arrow B$
when $x$ does not appear free in $B$. Application is left-associative
while the arrow $\arrow$ is right-associative. Terms are considered
up to $\alpha$-equivalence. Contexts contain the types of variables
while signatures contain the types of constants and their rewrite
rules. Each rewrite rule is accompanied by a context $\Gamma$ to
ensure it is well-typed.
\begin{defn}
The syntax of the $\lambdapi$-calculus modulo rewriting is:
\[
\begin{array}{llcl}
\text{variables} & x,y\\
\text{constants} & c\\
\text{sorts} & s & ::= & \typesort\mid\kindsort\\
\text{terms} & M,N,A,B & ::= & x\mid c\mid s\mid\prodterm xAB\mid\lamterm xAM\mid M\,N\\
\text{contexts} & \Gamma,\Delta & ::= & \cdot\mid\Gamma,x:A\\
\text{signatures} & \Sigma & ::= & \cdot\mid\Sigma,c:A\mid\Sigma,\rewriterule{\Gamma}MN
\end{array}
\]

\end{defn}
If $R$ is a set of rewrite rules, we write $\steps_{R}$ for the
induced reduction relation, $\steps_{R}^{+}$ for its transitive closure,
$\steps_{R}^{*}$ for its reflexive transitive closure, and $\equiv_{R}$
for its reflexive symmetric transitive closure. Given a signature
$\Sigma$, we write $\beta\Sigma$ for the union of the $\beta$ rule
with the rewrite rules of $\Sigma$.

The typing judgments $\Sigma\mid\Gamma\vdash M:A$ are accompanied
by context formation judgments $\Sigma\mid\Gamma\context$ and signature
formation judgments $\Sigma\signature$. We write $\Gamma\vdash M:A$
and $\Gamma\context$ instead of $\Sigma\mid\Gamma\vdash M:A$ and
$\Sigma\mid\Gamma\context$ when the signature is not ambiguous. The
rules are presented in Figure~\ref{fig:typing-lambdapi}.
\begin{figure}
\fbox{\begin{minipage}[t]{1\columnwidth}%
\begin{mathpar}
\inferrule*[right=Var]{\Gamma \context \\ (x : A) \in \Gamma}{\Gamma \vdash x : A}

\inferrule*[right=Const]{\Gamma \context \\ (c : A) \in \Sigma}{\Gamma \vdash c : A}

\inferrule*[right=Type]{\Gamma \context}{\Gamma \vdash \typesort : \kindsort}

\inferrule*[right=Prod]{\Gamma \vdash A : \typesort \\ \Gamma, x : A \vdash B : s}{\Gamma \vdash \prodterm{x}{A}{B} : s}

\inferrule*[right=Abs]{\Gamma \vdash A : \typesort \\ \Gamma, x : A \vdash M : B}{\Gamma \vdash \lamterm{x}{A}{M} : \prodterm{x}{A}{B}}

\inferrule*[right=App]{\Gamma \vdash M : \prodterm{x}{A}{B} \\ \Gamma \vdash N : A}{\Gamma \vdash \appterm{M}{N} : [N/x]B}

\inferrule*[right=Conv]{\Gamma \vdash M : A \\ \Gamma \vdash B : \typesort \\ A \equiv_{\beta\Sigma} B}{\Gamma \vdash M : B}\\

\inferrule*[right=EmptyCtx]{\Sigma \signature}{\emptycontext \context}

\inferrule*[right=VarCtx]{\Gamma \vdash A:\typesort \\ x \not\in \Gamma}{\Gamma,x:A \context}\\

\inferrule*[right=EmptySig]{ }{\emptycontext \signature}

\inferrule*[right=ConstSig]{\Sigma\mid\emptycontext \vdash A:s \\ c \not\in \Sigma}{\Sigma,c:A \signature}

\inferrule*[right=RewriteSig]{\Sigma\mid\Gamma \vdash M:A \\ \Sigma\mid\Gamma \vdash N:A}{\Sigma,\rewriterule{\Gamma}{M}{N} \signature}\\
\end{mathpar}%
\end{minipage}}

\protect\caption{\label{fig:typing-lambdapi}Typing rules of the $\protect\lambdapi$-calculus}
\end{figure}

\global\long\def\natconst{\const{nat}}
\global\long\def\zeroconst{\const z}
\global\long\def\succconst{\const s}
\global\long\def\streamconst{\const{stream}}

\global\long\def\natterm{\natconst}
\global\long\def\zeroterm{\zeroconst}
\global\long\def\succterm#1{\appterm{\succconst}{#1}}
\global\long\def\streamterm{\streamconst}

\begin{example}
Let $\Sigma$ be the signature containing
\[
\alpha:\typesort,c:\alpha,f:\alpha\to\typesort
\]
 and the rewrite rule
\[
\rewriterule{\emptycontext}{\appterm fc}{\prodterm y{\alpha}{\appterm fy\to\appterm fy}}\,.
\]
The term $\lamterm x{\appterm fc}{\appterm{\appterm xc}x}$ is well-typed
in $\Sigma$ and has the type $\appterm fc\arrow\appterm fc$. Notice
that this term would not be well-typed without the rewrite rule, even
if we replace all occurences of $\appterm fc$ by $\prodterm y{\alpha}{\appterm fy\to\appterm fy}$.
\end{example}
Dedukti imposes some additional restrictions on the rewrite rules
to keep type-checking decidable. In particular, the left side of a
rewrite rule must belong to the higher-order pattern fragment \cite{miller_unification_1991,miller_proofs_2004}
and the free variables of the right side must appear on the left side.
Moreover, the reduction relation $\steps_{\beta\Sigma}$ should be
confluent and strongly normalizing. This property is not verified
by the system and it is up to the user to ensure that it is indeed
the case. We discuss this in Section \ref{sub:correctness}.

\section{\label{sec:hol}HOL}

There are many different formulations for higher-order logic. The
intuitionistic formulation is based on implication and universal quantification
as primitive connectives, but the current systems generally use a
formulation called $\mathrm{Q}_{0}$ \cite{andrews_introduction_1986}
based on equality as a primitive connective. We take as reference
the logical system used by OpenTheory \cite{hurd_opentheory_2011},
which we will now briefly present.

The terms of the logic are terms of the simply typed $\lambda$-calculus,
with a base type $\const{bool}$ representing the type of propositions
and a type $\const{ind}$ of individuals. The terms can contain constant
symbols such as $(\const =)$, the symbol for equality, or $\const{select}$,
the symbol of choice. The logic supports a restricted form of polymorphism,
known as ML-style polymorphism, by allowing type variables, such as
$\alpha$ or $\beta$, to appear in types. For example, the type of
$(=)$ is $\alpha\arrow\alpha\arrow\const{bool}$.

Types can be parameterized through type operators of the form $p(A_{1},\ldots,A_{n})$.
For example, $\const{list}$ is a type operator of arity 1, and $\const{list}(\const{bool})$
is the type of lists of booleans. Type variables and type operators
are enough to describe all the types of HOL, because $\const{bool}$
can be seen as a type operator of arity 0, and the arrow $\arrow$
as a type operator of arity 2. Hence the type of $(=_{\alpha})$ is
in fact $\arrow(\alpha,\arrow(\alpha,\const{bool}()))$. We still
write $A\arrow B$ instead of $\const{\arrow}(A,B)$ for arrow types,
$p$ instead of $p()$ for type operators of arity 0, and $M=N$ instead
of $\appterm{\appterm{(=)}M}N$ when it is more convenient.
\begin{defn}
The syntax of HOL is:
\[
\begin{array}{llcl}
\text{type variables} & \alpha,\beta\\
\text{type operators} & p\\
\text{types} & A,B & ::= & \alpha\mid p(A_{1},\ldots,A_{n})\\
\text{term variables} & x,y\\
\text{term constants} & c\\
\mbox{terms} & M,N & ::= & x\mid\lamterm xAM\mid\appterm MN\mid c
\end{array}
\]

\end{defn}
The propositions of the logic are the terms of type $\const{bool}$
and the predicates are the terms of type $A\arrow\const{bool}$. We
use letters such as $\phi$ or $\psi$ to denote propositions. The
contexts, denoted by $\Gamma$ or $\Delta$, are sets of propositions,
and the judgments of the logic are of the form $\Gamma\vdash\phi$.
The derivation rules are presented in Figure \ref{fig:derivation-rules-hol}.

\begin{figure}
\fbox{\begin{minipage}[t]{1\columnwidth}%
\begin{mathpar}
\inferrule*[right=Refl $M$]{ }{\vdash M = M}

\inferrule*[right=AbsThm $x$]{\Gamma \vdash M = N}{\Gamma \vdash \lamterm{x}{A}{M}=\lamterm{x}{A}{N}}

\inferrule*[right=AppThm]{\Gamma \vdash F = G \\ \Delta \vdash M = N}{\Gamma \cup \Delta \vdash \appterm{F}{M} = \appterm{G}{N}}

\inferrule*[right=Beta $x$ $M$]{ }{\vdash \appterm{(\lamterm{x}{A}{M})}{x} = M}

\inferrule*[right=Assume]{ }{\{\phi\} \vdash \phi}

\inferrule*[right=EqMp]{\Gamma \vdash \phi = \psi \\ \Delta \vdash \phi}{\Gamma \cup \Delta \vdash \psi}

\inferrule*[right=DeductAntiSym]{\Gamma \vdash \phi \\ \Delta \vdash \psi}{(\Gamma - \{\psi\}) \cup (\Delta - \{\phi\}) \vdash  \phi = \psi}

\inferrule*[right=Subst $\sigma$]{\Gamma \vdash \phi}{\Gamma[\sigma] \vdash \phi[\sigma]}

\\
\end{mathpar}%
\end{minipage}}

\protect\caption{\label{fig:derivation-rules-hol}Derivation rules of HOL}
\end{figure}

\begin{example}
\label{ex:hol-transitivity}Here is a derivation of the transitivity
of equality: if $\Gamma\vdash x=y$ and $\Delta\vdash y=z$, then
$\Gamma\cup\Delta\vdash x=z$.\begin{mathpar}
\inferrule*[right=EqMp]{
	\inferrule*[right=AppThm]{
		\inferrule*[right=Refl]{ }{\vdash (\appterm{(=)}{x}) = (\appterm{(=)}{x})} \\
		\Delta \vdash y = z}
		{\Delta \vdash (x = y) = (x = z)} \\
	\Gamma \vdash x = y}
	{\Gamma \cup \Delta \vdash x = z}
\end{mathpar}
\end{example}
HOL supports mechanisms for defining new types and constants in a
conservative way. We will not consider them here. In addition to the
core derivation rules, three axioms are assumed:
\begin{itemize}
\item $\eta$-equality, which states that $\lamterm xA{\appterm Mx}=M$,
\item the axiom of choice, with a predeclared symbol of choice called $\const{select}$,
\item the axiom of infinity, which states that the type $\const{ind}$ is
infinite.
\end{itemize}
It is important to note that from $\eta$-convertibility and the axiom
of choice, we can derive the excluded middle \cite{beeson_foundations_1985},
making HOL a classical logic.

\section{\label{sec:translation}Translation}

In this section we show how to translate HOL to Dedukti. We define
a signature $\Sigma$ containing primitive declarations and definitions,
and a translation function assigning, to every construct of the logic,
a term that is well-typed in the signature $\Sigma$.

\subsubsection*{HOL Types}

To translate the simple types of HOL, we declare a new Dedukti type
called $\const{type}$ and three constructors $\const{bool}$, $\const{ind}$
and $\const{arrow}$.
\[
\begin{array}{lll}
\ttypeconst & : & \typesort\\
\tboolconst & : & \ttype\\
\tindconst & : & \ttype\\
\tarrowconst & : & \ttype\arrow\ttype\arrow\ttype
\end{array}
\]
One should not confuse $\const{type}$, which is the type of Dedukti
terms that represent HOL types, with $\const{Type}$, which is the
type of Dedukti types. The translation of a HOL type as a Dedukti
term is defined inductively on the structure of the type.
\begin{defn}[Translation of a HOL type as a Dedukti term]
 For any HOL type $A$, we define $\transterm A$, the translation
of $A$ as a term, to be
\[
\begin{array}{lcl}
\transterm{\alpha} & = & \alpha\\
\transterm{\const{bool}} & = & \const{bool}\\
\transterm{\const{ind}} & = & \const{ind}\\
\transterm{A\arrow B} & = & \appterm{\const{arrow}\transterm A}{\transterm B}\ .
\end{array}
\]
More generally, if we have an $n$-ary HOL type operator $p$, we
declare a constant $p$ of type $\underbrace{\const{type}\arrow\ldots\arrow\const{type}}_{n}\arrow\const{type}$,
and we translate an instance $p\left(A_{1},\ldots,A_{n}\right)$ of
this type operator to the term $\appterm{\appterm{\appterm p{\transterm{A_{1}}}}{\cdots}}{\transterm{A_{n}}}$.
\end{defn}

\subsubsection*{HOL Terms}

We declare a new dependent type called $\const{term}$ indexed by
a $\const{type}$, and we identify the terms of type $\appterm{\const{term}}{(\appterm{\appterm{\const{arrow}}A}B)}$
with the functions of type $\appterm{\const{term}}A\arrow\appterm{\const{term}}B$
by adding a rewrite rule. We also declare a constant $\const{eq}$
for HOL equality and a constant $\const{select}$ for the choice operator.
\[
\begin{array}{lll}
\ttermconst & : & \ttype\arrow\typesort\\
\teqconst & : & \prodterm{\alpha}{\ttype}{\tterm{\left(\tarrow{\alpha}{\left(\tarrow{\alpha}{\tbool}\right)}\right)}}\\
\tselectconst & : & \prodterm{\alpha}{\ttype}{\tterm{\left(\tarrow{\left(\tarrow{\alpha}{\tbool}\right)}{\alpha}\right)}}
\end{array}
\]
\[
\rewriterule{\alpha:\ttype,\beta:\ttype}{\tterm{\left(\tarrow{\alpha}{\beta}\right)}}{\tterm{\alpha}\arrow\tterm{\beta}}
\]
The symbol $\const{term}$ can be seen as a decoding function that
assigns a Dedukti type to every HOL type. The translation of a term
$M$ of type $A$ will then be a term of type $\const{term}\transterm A$.
\begin{defn}[Translation of a HOL type as a Dedukti type]
 For any HOL type $A$, we define
\[
\transtype A=\tterm{\transterm A}.
\]

\end{defn}

\begin{defn}[Translation of a HOL term as a Dedukti term]
 For any HOL term $M$, we define $\transterm M$, the translation
of $M$ as a term to be
\[
\begin{array}{lcl}
\transterm x & = & x\\
\transterm{\appterm MN} & = & \appterm{\transterm M}{\transterm N}\\
\transterm{\lamterm xAM} & = & \lamterm x{\transtype A}{\transterm M}\\
\transterm{(=_{A})} & = & \const{eq}\transterm A\\
\transterm{\const{select}_{A}} & = & \appterm{\const{select}}{\transterm A}\ .
\end{array}
\]
More generally, for every HOL constant $c$ of type $A$, if $\alpha_{1},\ldots,\alpha_{n}$
are the free type variables that appear in $A$, we declare a new
constant $c$ of type
\[
\prodterm{\alpha_{1}}{\const{type}}{\ldots\prodterm{\alpha_{n}}{\const{type}}{\transtype A}}
\]
and we translate an instance $c_{A_{1},\ldots,A_{n}}$ of this constant
by the term $\appterm{\appterm{\appterm c{\transterm{A_{1}}}}{\cdots}}{\transterm{A_{n}}}$.\end{defn}
\begin{example}
The term $\appterm{(\lamterm x{\alpha}x)}x$ is translated to
\[
\begin{array}{ccc}
\transterm{\appterm{(\lamterm x{\alpha}x)}x} & = & \appterm{(\lamterm x{\appterm{\const{term}}{\alpha}}x)}x\end{array}
\]
which is convertible to $x$.
\end{example}

\subsubsection*{HOL Proofs}

We declare a new type $\tproofconst$, to express the proof judgments
of HOL. It is a dependent type, indexed by the proposition $\phi$
that it is proving.
\[
\begin{array}{lll}
\tproofconst & : & \tterm{\tbool}\to\typesort\end{array}
\]

\begin{defn}[Translation of HOL propositions as Dedukti types]
 For any HOL proposition $\phi$ (i.e.~a HOL term of type bool),
we define
\[
\transtype{\phi}=\tproof{\transterm{\phi}}.
\]
For any HOL context $\Gamma=\phi_{1},\ldots,\phi_{n}$, we define
\[
\transtype{\Gamma}=h_{\phi_{1}}:\transtype{\phi_{1}},\ldots,h_{\phi_{n}}:\transtype{\phi_{n}}
\]
 where $h_{\phi_{1}},\ldots,h_{\phi_{n}}$ are fresh variables.
\end{defn}
We now take care of the derivation rules of HOL (Figure \ref{fig:derivation-rules-hol}).
In the following, we write $\prodterm{x,y}AB$ as a shortcut for $\prodterm xA{\prodterm yAB}$.

\subsubsection*{Equality proofs}

We declare $\const{Refl}$, $\const{FunExt}$, and $\const{AppThm}$:
\[
\begin{array}{lll}
\const{Refl} & : & \prodterm{\alpha}{\ttype}{\prodterm x{\tterm{\alpha}}{\tproof{\left(\teq{\alpha}xx\right)}}}\\
\const{FunExt} & : & \prodterm{\alpha,\beta}{\ttype}{\prodterm{f,g}{\tterm{\left(\tarrow{\alpha}{\beta}\right)}}{}}\\
 &  & \left(\prodterm x{\tterm{\alpha}}{\tproof{\left(\teq{\beta}{\left(\appterm fx\right)}{\left(\appterm gx\right)}\right)}}\right)\arrow\tproof{\left(\teq{\left(\tarrow{\alpha}{\beta}\right)}fg\right)}\\
\const{AppThm} & : & \prodterm{\alpha,\beta}{\ttype}{\prodterm{f,g}{\tterm{\left(\tarrow{\alpha}{\beta}\right)}}{\prodterm{x,y}{\tterm{\alpha}}{}}}\\
 &  & \tproof{\left(\teq{\left(\tarrow{\alpha}{\beta}\right)}fg\right)}\arrow\tproof{\left(\teq{\alpha}xy\right)}\arrow\tproof{\left(\teq{\beta}{\left(\appterm fx\right)}{\left(\appterm gy\right)}\right)}
\end{array}
\]
The constant $\const{FunExt}$ corresponds to \emph{functional extensionality},
which states that if two functions $f$ and $g$ of type $A\arrow B$
are equal on all values $x$ of type $A$, then $f$ and $g$ are
equal. We can use it to translate both the $\rulename{AbsThm}$ rule
and the $\eta$ axiom. Finally, since our encoding is shallow, $\beta$-equality
can be proved by reflexivity.
\begin{defn}
The rules $\rulename{Refl}$, $\rulename{AbsThm}$, $\rulename{AppThm}$,
and $\rulename{Beta}$ are translated to
\[
\transterm{\inferencerule[Refl]{}{\vdash M=M}}=\appterm{\appterm{\const{Refl}}{\transterm A}}{\transterm M}\quad\text{(where \ensuremath{A\ }is the type of \ensuremath{M})}
\]
\[
\transterm{\inferencerule[AbsThm]{\D}{\Gamma\vdash\lamterm xAM=\lamterm xAN}}=\appterm{\appterm{\appterm{\const{FunExt}}{\transterm A}}{\transterm B}}{\transterm{\lamterm xAM}\,\transterm{\lamterm xAN}\,(\lamterm x{\transterm A}{\transterm{\D}})}
\]
\[
\transterm{\inferencerule[AppThm]{\D_{1}\qquad\D_{2}}{\Gamma\cup\Delta\vdash\appterm FM=\appterm GN}}=\appterm{\appterm{\appterm{\appterm{\appterm{\appterm{\appterm{\appterm{\const{AppThm}}{\transterm A}}{\transterm B}}{\transterm F}}{\transterm G}}{\transterm M}}{\transterm N}}{\transterm{\D_{1}}}}{\transterm{\D_{2}}}
\]
\[
\transterm{\inferencerule[Beta]{}{\appterm{(\lamterm xAM)}x=M}}=\appterm{\appterm{\const{Refl}}{\transterm B}}{\transterm M}\quad\mbox{(where \ensuremath{B\ }is the type of \ensuremath{M})}\ .
\]

\end{defn}

\subsubsection*{Boolean proofs}

We declare the constants $\const{PropExt}$ and $\const{EqMp}$:
\[
\begin{array}{lll}
\const{PropExt} & : & \prodterm{p,q}{\tterm{\tbool}}{}\\
 &  & \left(\tproof q\arrow\tproof p\right)\arrow\left(\tproof q\arrow\tproof p\right)\arrow\tproof{\left(\teq{\tbool}pq\right)}\\
\const{EqMp} & : & \prodterm{p,q}{\tterm{\tbool}}{\tproof{\left(\teq{\tbool}pq\right)}\arrow\tproof p\arrow\tproof q}
\end{array}
\]
The constant $\const{PropExt}$ corresponds to \emph{propositional
extensionality} and, together with $\const{EqMp}$, states that equality
on booleans in HOL behaves like the connective ``\emph{if and only
if}''.
\begin{defn}
The rules $\rulename{Assume}$, $\rulename{DeductAntiSym}$, and $\rulename{EqMp}$
are translated to
\[
\transterm{\inferencerule[Assume]{}{\{\phi\}\vdash\phi}}=h_{\phi}\quad\text{(where \ensuremath{h_{\phi}\ }is a fresh variable)}
\]
\begin{multline*}
\transterm{\inferencerule[DeductAntiSym]{\D_{1}\qquad\D_{2}}{(\Gamma-\{\psi\})\cup(\Delta-\{\phi\})\vdash\phi=\psi}}=\\
\appterm{\appterm{\appterm{\appterm{\const{PropExt}}{\transterm{\phi}}}{\transterm{\psi}}}{\left(\lamterm{h_{\psi}}{\transtype{\psi}}{\transterm{\D_{1}}}\right)}}{\left(\lamterm{h_{\phi}}{\transtype{\phi}}{\transterm{\D_{2}}}\right)}
\end{multline*}
\[
\transterm{\inferencerule[EqMp]{\D_{1}\qquad\D_{2}}{\Gamma\cup\Delta\vdash\psi}}=\appterm{\appterm{\appterm{\appterm{\const{EqMp}}{\transterm{\phi}}}{\transterm{\psi}}}{\transterm{\D_{1}}}}{\transterm{\D_{2}}}\ .
\]

\end{defn}

\subsubsection*{Substitution proofs}

The HOL rule $\rulename{subst}$ derives $\Gamma[\sigma]\vdash\phi[\sigma]$
from $\Gamma\vdash\phi$. In OpenTheory, the substitution can substitute
for both term and type variables but type variables are instantiated
first. For the sake of clarity, we split this rule in two steps: one
for term substitution of the form $\sigma=M_{1}/x_{1},\ldots,M_{n}/x_{n}$,
and one for type substitution of the form $\theta=A_{1}/\alpha_{1},\ldots,A_{m}/\alpha_{m}$.
In Dedukti, we have to rely on $\beta$-reduction to express substitution.
We can correctly translate a parallel substitution $M[M_{1}/x_{1},\ldots,M_{n}/x_{n}]$
as
\[
\appterm{\appterm{\appterm{(\lamterm{x_{1}}{B_{1}}{\ldots\lamterm{x_{n}}{B_{n}}M})}{M_{1}}}{\ldots}}{M_{n}}
\]
where $B_{i}$ is the type of $M_{i}$.
\begin{defn}
The rule $\rulename{Subst}$ is translated to
\[
\transterm{\inferencerule[TypeSubst]{\D}{\Gamma[\theta]\vdash\phi[\theta]}}=\appterm{\appterm{\appterm{(\lamterm{\alpha_{1}}{\const{type}}{\ldots\lamterm{\alpha_{m}}{\const{type}}{\transterm{\D}}})}{\transterm{A_{1}}}}{\ldots}}{\transterm{A_{m}}}
\]
\[
\transterm{\inferencerule[TermSubst]{\D}{\Gamma[\sigma]\vdash\phi[\sigma]}}=\appterm{\appterm{\appterm{(\lamterm{x_{1}}{\transtype{B_{1}}}{\ldots\lamterm{x_{n}}{\transtype{B_{n}}}{\transterm{\D}}})}{\transterm{M_{1}}}}{\ldots}}{\transterm{M_{n}}}
\]

\end{defn}

\section{\label{sub:correctness}Correctness}

\looseness=-1 The correctness of the translation is expressed by
two properties: \emph{completeness} and \emph{soundness}. The first
states that all the generated terms have the correct type. For example,
the translation of a term of type $A$ has type $\transtype A$ while
the translation of a proof of $\phi$ has type $\transtype{\phi}$.
The second states that if a proof term is well-typed in Dedukti, then
the proof is correct in the original logic. These two properties ensure
that we can use Dedukti as an independent proof checker: we can use
it to re-verify the proofs of OpenTheory, and moreover we can be sure
that, if a proof is accepted by Dedukti, then it is also valid in
OpenTheory.

\subsubsection*{Completeness}

Let $\Sigma$ be the signature of HOL containing the declarations
and rewrite rules of the previous sections.
\begin{lem}
\label{lem:hol-type-term}For any HOL type $A$,
\[
\Sigma\mid\alpha_{1}:\const{type},\ldots,\alpha_{n}:\const{type}\vdash\transterm A:\const{type}
\]
where $\alpha_{1},\ldots,\alpha_{n}$ are the free type variables
appearing in $A$.
\end{lem}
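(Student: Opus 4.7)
The plan is to prove this by straightforward structural induction on the HOL type $A$, making the induction hypothesis strong enough to range over arbitrary contexts that contain the free type variables of $A$. Before the induction, I would observe that the signature $\Sigma$ already contains the declarations $\tboolconst:\ttype$, $\tindconst:\ttype$, and $\tarrowconst:\ttype\arrow\ttype\arrow\ttype$, together with the declarations $p:\underbrace{\ttype\arrow\cdots\arrow\ttype}_{n}\arrow\ttype$ for every $n$-ary HOL type operator. The context $\alpha_1:\ttype,\ldots,\alpha_n:\ttype$ is clearly well-formed by repeated applications of \rulename{VarCtx} using \rulename{Type}.

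The induction then goes case by case on the definition of $\transterm{A}$. For a type variable $\alpha$, since $\alpha$ necessarily appears in the context by hypothesis, \rulename{Var} gives $\alpha:\ttype$. For the base types $\tboolconst$ and $\tindconst$, \rulename{Const} applies directly using the corresponding entries of $\Sigma$. For the arrow case $A\arrow B$, the free type variables of $A$ and $B$ are both contained in those of $A\arrow B$, so the induction hypothesis yields $\transterm{A}:\ttype$ and $\transterm{B}:\ttype$; then two applications of \rulename{App} to the constant $\tarrowconst:\ttype\arrow\ttype\arrow\ttype$ produce $\appterm{\appterm{\tarrowconst}{\transterm{A}}}{\transterm{B}}:\ttype$, which is exactly $\transterm{A\arrow B}$. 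The case of a general $n$-ary type operator $p(A_1,\ldots,A_n)$ is completely analogous, using $n$ applications of \rulename{App} to the declared constant $p$.

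I do not expect any genuine obstacle: the encoding of types was designed to be a direct structural transcription, so the induction goes through mechanically once the context formation is handled. The only mild subtlety is bookkeeping about the free type variables — the statement fixes the context to contain exactly the free type variables of $A$, but during the induction on $A\arrow B$ we apply the hypothesis to $A$ and $B$ separately in a context that may contain strictly more variables than $\freevars(A)$ or $\freevars(B)$. This is harmless because the lemma as stated holds a fortiori in any well-formed extension of such a context (by weakening), so in practice I would either prove the stronger statement ``for any well-formed context $\Gamma$ containing $\freevars(A):\ttype$'' directly, or appeal to a standard weakening lemma for the $\lambdapi$-calculus modulo rewriting.
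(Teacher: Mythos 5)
Your proof is correct and follows exactly the structural induction on $A$ that the paper leaves implicit (the lemma is stated without proof; the paper only writes out ``by induction on the structure of $\mathcal{D}$'' for the analogous completeness theorem), including the standard strengthening of the statement to arbitrary well-formed contexts containing $\freevars(A)$ to make the arrow and type-operator cases go through. One tiny slip worth fixing: well-formedness of the context $\alpha_{1}:\const{type},\ldots,\alpha_{n}:\const{type}$ needs $\vdash\const{type}:\typesort$, which is obtained by \rulename{Const} from the declaration $\const{type}:\typesort$ in $\Sigma$, not by the \rulename{Type} rule, which types the sort $\typesort$ itself as $\kindsort$.
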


\begin{lem}
\label{lem:hol-term-term}For any HOL term $M$ of type $A$,
\[
\Sigma\mid\alpha_{1}:\const{type},\ldots,\alpha_{n}:\const{type},x_{1}:\transtype{A_{1}},\ldots x_{n}:\transtype{A_{n}}\vdash\transterm M:\transtype A
\]
where $\alpha_{1},\ldots,\alpha_{n}$ are the free type variables
and $x_{1}:A_{1},\ldots,x_{n}:A_{n}$ are the free term variables
appearing in $M$.
\end{lem}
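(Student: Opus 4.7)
The plan is to proceed by structural induction on the HOL term $M$, with the rewrite rule $\tterm{\tarrow{\alpha}{\beta}}\rewrite\tterm{\alpha}\arrow\tterm{\beta}$ doing the heavy lifting whenever an arrow type needs to be introduced or eliminated. The variable case $M=x_{i}$ is immediate: $\transterm{x_{i}}=x_{i}$, and the declaration $x_{i}:\transtype{A_{i}}$ is already present in the ambient context, with $A_{i}$ equal to the type $A$ of $M$ by construction.

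For the abstraction case $M=\lamterm xBN$ of type $B\arrow C$, I would apply the induction hypothesis to $N$ in the context extended by $x:\transtype B$, obtaining $\transterm N:\transtype C$. The \rulename{Abs} rule then yields $\lamterm x{\transtype B}{\transterm N}:\prodterm x{\transtype B}{\transtype C}$, which, since $x$ does not occur in $\transtype C$, is the non-dependent product $\transtype B\arrow\transtype C=\tterm{\transterm B}\arrow\tterm{\transterm C}$. A single (backward) use of the rewrite rule converts this to $\tterm{\tarrow{\transterm B}{\transterm C}}=\transtype{B\arrow C}$, and \rulename{Conv} closes the case. The application case $M=\appterm{M_{1}}{M_{2}}$ with $M_{1}:B\arrow C$ and $M_{2}:B$ is symmetric: the induction hypothesis gives $\transterm{M_{1}}:\transtype{B\arrow C}$, \rulename{Conv} re-reads this as $\transtype B\arrow\transtype C$, and then \rulename{App} produces $\appterm{\transterm{M_{1}}}{\transterm{M_{2}}}:\transtype C$.

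For the constant cases $(=_{B})$, $\const{select}_{B}$, and more generally a polymorphic constant $c_{A_{1},\ldots,A_{n}}$, I will need an auxiliary fact: the translation of HOL types commutes with type substitution, i.e.~$\transterm{A[B/\alpha]}=\transterm A[\transterm B/\alpha]$. This follows by a routine induction on $A$ from the definition of $\transterm{\cdot}$ on types. With it in hand, successively applying the declared polymorphic type of $c$, namely $\prodterm{\alpha_{1}}{\const{type}}{\ldots\transtype A}$, to the Dedukti terms $\transterm{A_{1}},\ldots,\transterm{A_{n}}$ yields exactly $\transtype{A[A_{1}/\alpha_{1},\ldots,A_{n}/\alpha_{n}]}$, which is the expected type of the instance $c_{A_{1},\ldots,A_{n}}$.

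The main obstacle, and really the only substantive part of the argument, will be the bookkeeping around the rewrite rule on $\ttermconst$: every introduction or elimination of a HOL arrow type forces an explicit \rulename{Conv} step to switch between the ``coded'' view $\tterm{\tarrow AB}$ and the ``native'' view $\tterm A\arrow\tterm B$, and each such invocation demands that the target type be shown well-formed at sort $\typesort$. This well-formedness follows from Lemma~\ref{lem:hol-type-term} combined with the declaration $\ttermconst:\ttype\arrow\typesort$, and it must be appealed to silently at every such step; beyond this, the proof is a straightforward rule-by-rule induction.
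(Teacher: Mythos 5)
Your proof is correct and follows exactly the argument the paper intends (the paper in fact states this lemma without giving a proof; the evident intended argument is the structural induction on $M$ you carry out, with the $\ttermconst$ rewrite rule plus \rulename{Conv} handling arrow types and a type-substitution commutation fact handling polymorphic constants). Your explicit attention to the side conditions of \rulename{Conv} and to the auxiliary lemma $\transterm{A[B/\alpha]}=\transterm A[\transterm B/\alpha]$ supplies detail the paper leaves implicit.
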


\begin{thm}
\label{lem:hol-proof-term}For any HOL proof $\mathcal{D}$ of $\Gamma\vdash\phi$,
\[
\Sigma\mid\alpha_{1}:\const{type},\ldots,\alpha_{n}:\const{type},x_{1}:\transtype{A_{1}},\ldots x_{n}:\transtype{A_{n}},\transtype{\Gamma}\vdash\transterm{\mathcal{D}}:\transtype{\phi}
\]
where $\alpha_{1},\ldots,\alpha_{n}$ are the free type variables
and $x_{1}:A_{1},\ldots,x_{n}:A_{n}$ are the free term variables
appearing in $\D$.\end{thm}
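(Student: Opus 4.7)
The plan is to proceed by structural induction on the derivation $\mathcal{D}$, with one case per HOL derivation rule of Figure~\ref{fig:derivation-rules-hol} (splitting $\rulename{Subst}$ into its term and type variants). In every case I would apply the Const rule to the matching primitive declared in $\Sigma$, chain App to feed in the arguments, invoke Lemmas~\ref{lem:hol-type-term} and~\ref{lem:hol-term-term} to type the translations of types and terms, and appeal to the inductive hypothesis to type the translations of subderivations. Whenever the resulting Dedukti type differs from $\transtype{\phi}$ only up to $\beta\Sigma$-convertibility, the Conv rule bridges the gap.

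The rules $\rulename{Refl}$, $\rulename{Assume}$, $\rulename{EqMp}$, $\rulename{AppThm}$, and $\rulename{DeductAntiSym}$ are essentially mechanical: in each case the translation is a direct application of the corresponding primitive ($\const{Refl}$, a hypothesis variable $h_{\phi}$ supplied by $\transtype{\Gamma}$, $\const{EqMp}$, $\const{AppThm}$, and $\const{PropExt}$ respectively), and the typing falls out once the sub-arguments are typed. The case of $\rulename{AbsThm}$ is slightly more delicate: $\const{FunExt}$ expects its function arguments at type $\tterm{(\tarrow{\alpha}{\beta})}$, whereas the translations $\transterm{\lamterm xAM}$ and $\transterm{\lamterm xAN}$ are literal Dedukti $\lambda$-abstractions whose inferred type is $\transtype A \arrow \transtype B$. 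The rewrite rule $\tterm{(\tarrow{\alpha}{\beta})} \rewrite \tterm{\alpha} \arrow \tterm{\beta}$ makes these types convertible, so a single use of Conv suffices.

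The $\rulename{Beta}$ case is the first place where the reduction-preserving nature of the embedding really pays off: the translation returns $\appterm{\appterm{\const{Refl}}{\transterm B}}{\transterm M}$, whose natural type is $\tproof{(\teq{\transterm B}{\transterm M}{\transterm M})}$, while the expected type is $\tproof{(\teq{\transterm B}{\appterm{(\lamterm x{\transtype A}{\transterm M})}{x}}{\transterm M})}$. The two differ by exactly one $\beta$-step in Dedukti, and Conv concludes.

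The main obstacle is the $\rulename{Subst}$ rule, whose translation relies crucially on Dedukti $\beta$-reduction to simulate HOL substitution. The key auxiliary ingredient is a \emph{substitution lemma} stating that the translation commutes with substitution both at the level of types, $\transterm{A[\theta]} \equiv_{\beta\Sigma} \transterm{A}[\transterm{\theta}]$, and of terms, $\transterm{M[\sigma]} \equiv_{\beta\Sigma} \transterm{M}[\transterm{\sigma}]$, where $\transterm{\sigma}$ and $\transterm{\theta}$ denote the pointwise translations of HOL substitutions. This lemma should go through by straightforward structural induction on types and terms; the one subtlety is that OpenTheory substitutions are simultaneous while the nested $\beta$-redexes produced by the translation are contracted one at a time, which is harmless provided the substitution is capture-avoiding, a condition we may always arrange by $\alpha$-renaming the bound $x_{i}$ (respectively $\alpha_{i}$) away from the free variables of the $M_{j}$ (respectively $A_{j}$). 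Once this lemma is established, the typing of $\transterm{\mathcal{D}}$ in the $\rulename{TermSubst}$ and $\rulename{TypeSubst}$ cases follows by iterated App together with a final Conv.
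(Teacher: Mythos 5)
Your proposal takes exactly the same route as the paper, whose entire proof is the single line ``by induction on the structure of $\mathcal{D}$''; you are simply supplying the case analysis that the authors leave implicit. The details you single out --- the Conv steps needed for $\rulename{AbsThm}$ and $\rulename{Beta}$ via the $\ttermconst$ rewrite rule and $\beta$-reduction, and the auxiliary lemma that translation commutes with substitution up to $\equiv_{\beta\Sigma}$ for the $\rulename{Subst}$ cases --- are the right ones.
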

\begin{proof}
By induction on the structure of $\mathcal{D}$.
\end{proof}

\subsubsection*{Soundness}

Proving the soundness of the embedding is less straightforward than
proving completeness. In fact, it is closely related to the confluence
and normalization properties of the system. We state the results here
and refer the reader to the works of Assaf, Cousineau, and Dowek \cite{assaf_conservativity_2013,cousineau_embedding_2007,dowek_models_2014}
for the complete proofs.\footnote{The terms \emph{soundness }and \emph{completeness }are interchanged
in Cousineau and Dowek's paper \cite{cousineau_embedding_2007}.}
\begin{lem}
The reduction relation $\steps_{\beta\Sigma}$ is confluent.
\end{lem}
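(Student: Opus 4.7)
The plan is to show that the combined rewrite system $\beta\Sigma$ is orthogonal and then invoke a standard confluence result. Since the only rewrite rule in $\Sigma$ beyond $\beta$ is
\[\tterm{(\tarrow{\alpha}{\beta})}\ \rewrite\ \tterm{\alpha}\arrow\tterm{\beta},\]
this reduces to checking that this single rule is left-linear and admits no critical pairs, either with itself or with $\beta$.

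Left-linearity is immediate: $\alpha$ and $\beta$ each occur exactly once on the left-hand side, which is a first-order pattern whose head symbol is the constant $\const{term}$. For critical pairs, I would first observe that the rule cannot overlap with itself, because its left-hand side has no proper non-variable subterm headed by $\const{term}$. It cannot overlap with the $\beta$-rule either: the $\beta$-redex has the form $\appterm{(\lamterm xAM)}N$, while the left-hand side of our rule contains no $\lambda$-abstraction and is not an application of one; conversely, no subterm of a $\beta$-redex is headed by $\const{term}$. Hence the combined system is orthogonal.

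Confluence then follows by the classical Tait-Martin-L\"of parallel-reduction technique, adapted to orthogonal higher-order rewriting. Concretely, I would introduce a parallel reduction relation $\Rightarrow$ that simultaneously contracts any chosen set of $\beta$-redexes and $\const{term}$-redexes in one step, verify that $\Rightarrow$ satisfies the diamond property by induction on the structure of terms (here orthogonality is crucial: contracting redexes in parallel never destroys another redex that is to be contracted, and nested redexes compose cleanly), and conclude confluence of $\steps_{\beta\Sigma}$ from the diamond property of $\Rightarrow$ in the usual way.

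The main technical subtlety is that terms, types, and kinds share a common syntax in the $\lambdapi$-calculus, so one might worry about interactions between typing and reduction. However, confluence is a property of the underlying untyped reduction on raw preterms, so the parallel-reduction argument goes through just as in the pure $\lambda$-calculus; well-typedness of the rewrite rule (ensured by \rulename{RewriteSig}) then lifts the result to well-typed terms. The full technical development for this style of embedding is carried out in the cited works of Cousineau, Dowek, and Assaf.
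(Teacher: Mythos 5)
The paper does not actually prove this lemma: it only states it and defers to the cited works of Assaf, Cousineau, and Dowek for the argument, so any self-contained proof is ``a different route'' by default. Your argument is correct and is essentially the standard one found in that literature: the signature $\Sigma$ of Section~4 contributes exactly one rewrite rule, $\tterm{(\tarrow{\alpha}{\beta})}\rewrite\tterm{\alpha}\arrow\tterm{\beta}$, whose left-hand side is a left-linear first-order pattern headed by $\const{term}$; it has no self-overlap and no overlap with $\beta$, so the combined system is orthogonal and confluence follows by parallel reduction (equivalently, by the classical result that the union of $\beta$ with a left-linear, non-overlapping first-order system is confluent on raw terms). You are also right to insist that confluence is established on untyped preterms, which is what the \rulename{Conv} rule needs. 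Two small remarks. First, your phrase ``no subterm of a $\beta$-redex is headed by $\const{term}$'' is literally false --- the body or argument of a redex may well contain $\const{term}$-headed subterms --- but those positions lie below the metavariables of the $\beta$-scheme and therefore do not generate critical pairs, which is the claim you actually need. Second, your argument silently covers the fact that $\Sigma$ is open-ended (it declares a constant for every HOL type operator and term constant), which is fine since none of these contribute rewrite rules; but note that it would need to be revisited for the extended signature of Section~7, where additional rewrite rules on $\const{proof}$ are added (these are still orthogonal, so the same method applies, but that is outside the lemma as stated).
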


\begin{lem}
The reduction relation $\steps_{\beta\Sigma}$ is strongly normalizing.
\end{lem}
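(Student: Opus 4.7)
The strategy is to instantiate the reducibility candidates framework developed by Cousineau and Dowek, which establishes strong normalization for the $\lambdapi$-modulo encoding of any strongly normalizing functional PTS. The signature $\Sigma$ essentially encodes HOL viewed as a PTS, and the underlying calculus of HOL (a polymorphic simply typed $\lambda$-calculus contained in System F) is strongly normalizing, so the conclusion should follow from the general theorem once one verifies that $\Sigma$ fits the required pattern.

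First, I would assign to each HOL type $A$ a reducibility candidate $I(A)$ in the sense of Girard: a set of strongly normalizing $\lambdapi$-terms closed under reduction and containing all neutral terms. Base types $\tboolconst$ and $\tindconst$ are interpreted as the candidate of strongly normalizing terms of the appropriate shape, while the arrow type $\tarrow{\alpha}{\beta}$ is interpreted by the standard function-space operation on candidates. By construction, the candidate for $\tterm{(\tarrow{\alpha}{\beta})}$ then coincides with the candidate for $\tterm{\alpha} \arrow \tterm{\beta}$, so the sole rewrite rule of $\Sigma$ preserves the interpretation. The remaining constants ($\teqconst$, $\tselectconst$, $\const{Refl}$, $\const{FunExt}$, $\const{AppThm}$, $\const{PropExt}$, $\const{EqMp}$) introduce no rewrite rules and can be treated as inert neutral constants inhabiting the candidates of their declared types.

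Next comes the standard adequacy lemma: every well-typed term lies in the candidate interpreting its type under any well-formed valuation, proved by induction on the typing derivation. The cases for $\rulename{Abs}$, $\rulename{App}$, and $\rulename{Prod}$ are the classical ones; the $\rulename{Conv}$ case reduces to showing that $\beta\Sigma$-convertible types have the same interpretation, which, via confluence and subject reduction, amounts to checking closure of the interpretation under the one rewrite rule (which holds by design) and under $\beta$ (classical). Strong normalization of $\steps_{\beta\Sigma}$ then follows, because every candidate consists of strongly normalizing terms and every well-typed term belongs to some candidate.

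The main obstacle is not the high-level structure but the precise formulation of candidates adapted to $\lambdapi$-modulo rather than to pure PTS: candidates must be stable under the full relation $\steps_{\beta\Sigma}$, including rewrite steps occurring under the dependent type constructor $\ttermconst$, and the dependency forces us to work with an indexed family of candidates parameterised by type-level valuations. Establishing these technical properties, in particular that the interpretation is well-defined modulo the rewrite rule and that adequacy survives the $\rulename{Conv}$ rule in a setting where types are subject to nontrivial rewriting, is where the real work lies, and it is exactly this machinery that is developed in the cited works of Assaf, Cousineau, and Dowek.
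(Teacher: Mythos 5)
The paper offers no proof of this lemma at all: it simply states the result and defers to the cited works of Assaf, Cousineau, and Dowek, which establish strong normalization for such encodings via precisely the reducibility-candidates machinery you outline. Your sketch correctly identifies that strategy and honestly locates the genuine technical burden (candidates stable under $\steps_{\beta\Sigma}$ and adequacy through the conversion rule), so it matches the approach the paper relies on by citation.
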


\begin{thm}
If $\Sigma\mid\transtype{\Gamma}\vdash M:\transtype A$ then $M$
corresponds to a valid proof of $\Gamma\vdash A$ in HOL.
\end{thm}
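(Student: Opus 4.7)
The plan is to leverage the two preceding lemmas (confluence and strong normalization of $\steps_{\beta\Sigma}$) by passing to a normal form. By strong normalization, the given $M$ reduces to some $\beta\Sigma$-normal form $M'$, and by confluence this form is unique. Subject reduction, a standard property of the $\lambdapi$-calculus modulo that follows from the well-typedness condition imposed on rewrite rules, gives $\Sigma \mid \transtype{\Gamma} \vdash M' : \transtype{A}$. It therefore suffices to show that every $\beta\Sigma$-normal term inhabiting a translated proposition type is syntactically the translation $\transterm{\mathcal{D}}$ of some HOL derivation $\mathcal{D}$ of $\Gamma \vdash A$.

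The core of the argument is a stratified adequacy theorem, proved in three layers by induction on the structure of normal forms. First I would treat types: a normal inhabitant of $\typesort$ under a context declaring only type variables $\alpha_i$ must be of the form $\transterm{A}$ for a unique HOL type $A$, because the only constants producing something of type $\ttype$ are $\tboolconst$, $\tindconst$, $\tarrowconst$, and the declared type operators. Next, for terms: a normal inhabitant of $\tterm{\transterm{A}}$ under the appropriate typing hypotheses must be $\transterm{M}$ for a unique HOL term $M$ of type $A$. The rewrite rule $\tterm{\left(\tarrow{\alpha}{\beta}\right)}\rewrite\tterm{\alpha}\arrow\tterm{\beta}$ is essential here: it forces a normal inhabitant of $\tterm{\left(\tarrow{\transterm{A}}{\transterm{B}}\right)}$ to be either an application or a $\lambda$-abstraction, matching the syntactic forms of HOL exactly. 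Finally, for proofs: a normal inhabitant of $\tproof{\transterm{\phi}}$ has its head either among the declared proof combinators ($\const{Refl}$, $\const{FunExt}$, $\const{AppThm}$, $\const{PropExt}$, $\const{EqMp}$) or equal to a hypothesis variable $h_{\psi} \in \transtype{\Gamma}$. Inverting the typing of each subterm and invoking the two earlier adequacy layers recovers the HOL components from which to assemble a derivation $\mathcal{D}$ corresponding to the matching rule of Figure~\ref{fig:derivation-rules-hol}.

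The main obstacle is controlling the $\rulename{Conv}$ rule during inversion, because the rewrite relation can identify syntactically distinct types. To decide, for instance, that a sub-derivation of apparent type $\tproof{\transterm{\psi}}$ really proves $\psi$ and not some other proposition, one needs confluence of $\steps_{\beta\Sigma}$ together with injectivity of the head constructors $\tarrowconst$, $\teqconst$, and $\tproofconst$ on $\beta\Sigma$-normal forms. A secondary subtlety concerns the translation of $\rulename{Subst}$, which introduces $\beta$-redexes that the normalization step will already have contracted; the extracted derivation must then absorb these via the admissibility of substitution in HOL rather than an explicit rule. Once these adequacy and injectivity facts are assembled, applying them to $M'$ yields the required HOL derivation of $\Gamma \vdash A$, as detailed in~\cite{assaf_conservativity_2013,cousineau_embedding_2007,dowek_models_2014}.
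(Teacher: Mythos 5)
The paper does not actually prove this theorem: after stating confluence and strong normalization of $\steps_{\beta\Sigma}$ as lemmas, it explicitly defers the soundness argument to the cited works of Assaf, Cousineau, and Dowek. Your sketch --- normalize the proof term, then establish a stratified adequacy result (types, then terms, then proofs) by induction on $\beta\Sigma$-normal forms, using confluence to control the \rulename{Conv} rule --- is precisely the strategy those references carry out and that the paper's prose alludes to, so there is nothing to fault at the level of approach. One imprecision worth flagging: the adequacy statement cannot be that a normal inhabitant of $\transtype{\phi}$ is \emph{syntactically} $\transterm{\mathcal{D}}$ for some derivation $\mathcal{D}$, since the translation is not reduction-closed in that direction --- translations of derivations (e.g.\ of \rulename{Subst}, or of terms containing $\beta$-redexes) are generally not normal, so the correct claim is that the normal form is convertible to the translation of some derivation, with the extracted derivation recovering the contracted redexes through admissibility of substitution in HOL, as you partially note. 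Subject reduction in the $\lambdapi$-calculus modulo also needs confluence (not merely well-typedness of the rewrite rules) for the product-compatibility step, but since confluence is among your hypotheses this is a presentational point rather than a gap.
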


\section{\label{sec:implementation}Implementation}

\looseness=-1 We implemented our translation in an automated tool
called Holide. It works as an OpenTheory virtual machine that additionally
keeps track of the corresponding proof terms for theorems. The program
reads a HOL proof written in the OpenTheory article format (\texttt{.art})
and outputs a Dedukti file (\texttt{.dk}) containing its translation.
We can run Dedukti on the generated file to verify it. All generated
files are linked with a hand-written file \texttt{hol.dk} containing
the signature $\Sigma$ that we defined in Section \ref{sec:translation}.
Our software is available online at \href{https://www.rocq.inria.fr/deducteam/Holide/}{https://www.rocq.inria.fr/deducteam/Holide/}.

\looseness=-1 HOL proofs are known to be very large \cite{kaliszyk_scalable_2013,keller_importing_2010,obua_importing_2006},
and we needed to implement sharing of proofs, terms, and types in
order to reduce them to a manageable size. OpenTheory already provides
some form of proof sharing but we found it easier to completely factorize
the derivations into individual steps.

We used Holide to translate the OpenTheory standard library. The library
is organized into logical packages, each corresponding to a theory
such as lists or natural numbers. We were able to verify all of the
generated files. The results are summarized in Table \ref{tab:translation-stdlib}.
We list the size of both the source files and the files generated
by the translation after compression using gzip. The reason we use
the size of the compressed files for comparison is because it provides
a more reasonable measure that is less affected by syntax formatting
and whitespace. We also list the time it takes to translate and verify
each package. These tests were done on a 64-bit Intel Xeon(R) CPU
@ 2.67GHz \texttimes{} 4 machine with 4 GB of RAM.

\begin{table}[t]
\begin{centering}
\begin{tabular}{lrrrr}
\hline
\noalign{\vskip\doublerulesep}
Package & \multicolumn{2}{c}{Size (kB)} & \multicolumn{2}{c}{Time (s)}\tabularnewline[\doublerulesep]
\cline{2-5}
\noalign{\vskip\doublerulesep}
 & OpenTheory & Dedukti & Translation & Verification\tabularnewline[\doublerulesep]
\hline
\noalign{\vskip\doublerulesep}
unit & 5 & 13 & 0.2 & 0.0\tabularnewline
function & 16 & 53 & 0.3 & 0.2\tabularnewline
pair & 38 & 121 & 0.8 & 0.5\tabularnewline
bool & 49 & 154 & 0.9 & 0.5\tabularnewline
sum & 84 & 296 & 2.1 & 1.1\tabularnewline
option & 93 & 320 & 2.2 & 1.2\tabularnewline
relation & 161 & 620 & 4.6 & 2.8\tabularnewline
list & 239 & 827 & 5.7 & 3.2\tabularnewline
real & 286 & 945 & 6.5 & 3.1\tabularnewline
natural & 343 & 1065 & 6.8 & 3.2\tabularnewline
set  & 389 & 1462 & 10.2 & 5.8\tabularnewline[\doublerulesep]
\hline
\noalign{\vskip\doublerulesep}
\textbf{Total} & 1702 & 5877 & 40.3 & 21.6\tabularnewline[\doublerulesep]
\hline
\end{tabular}
\par\end{centering}

\vspace{-2pt}

\protect\caption{\label{tab:translation-stdlib}Translation of the OpenTheory standard
library}
\end{table}

Overall, the size of the generated files is about 3 to 4 times larger
than the source files. Given that this is an encoding in a logical
framework, an increase in the size is to be expected, and we find
that this factor is very reasonable. There are no similar translations
to compare to except the one of Keller and Werner \cite{keller_importing_2010}.
The comparison is difficult because they work with a slightly different
form of input, but they produce several hundred megabytes of proofs.
Similary, an increase in verification time is to be expected compared
to verifying OpenTheory directly, but our results are still very reasonable
given the nature of the translation. Our time is about 4 times larger
than OpenTheory, which takes about 5 seconds to verify the standard
library. It is in line with the scalable translation of Kalyszyk and
Krauss to Isabelle/HOL, which takes around 30 seconds \cite{kaliszyk_scalable_2013}.
In comparison, Keller and Werner's translation takes several hours,
although we should note that our work greatly benefited from their
experience.

\section{\label{sec:extensions}Extensions}

In this section we show some additional advantages of having a translation
which preserves reduction.

\subsubsection*{Compressing conversion proofs}

One of the reasons why HOL proofs are so large is that conversion
proofs have to traverse the terms using the congruence rules $\rulename{AbsThm}$
and $\rulename{AppThm}$. Since we now prove $\beta$-reduction using
reflexivity, large conversion proofs could be reduced to a single
reflexivity step, therefore reducing the size of the proofs.\footnote{This also applies to conversions involving constant definitions, which
we did not cover here but are also assumed as an axiom in HOL.}
\begin{example}
The following proof of $\appterm f{(\appterm g{(\appterm{(\lamterm xAx)}x)})}=\appterm f{(\appterm g{(x}))}$,
\begin{mathpar}
\inferrule*{
	\inferrule*[right=Refl $f$]{ }{\vdash f=f} \\
	\inferrule*[right=AppThm]{
		\inferrule*[right=Refl $g$]{ }{\vdash g=g} \\
		\inferrule*[right=Beta]{ }{\vdash \appterm{(\lamterm{x}{A}{x})}{x} = x}
	} {\vdash \appterm{g}{(\appterm{(\lamterm{x}{A}{x})}{x})} = \appterm{g}{x}}
} {\vdash \appterm{f}{(\appterm{g}{(\appterm{(\lamterm{x}{A}{x})}{x})})} = \appterm{f}{(\appterm{g}{x})}}
\end{mathpar}can be translated simply as $\appterm{\appterm{\const{Refl}}C}{(\appterm f{(\appterm gx)})}$,
where $A\to B$ is the type of $g$ and $B\to C$ is the type of $f$.
\end{example}

\subsubsection*{HOL as a pure type system}

It turns out that HOL can be seen as a pure type system called $\lambda_{HOL}$
with three sorts \cite{barendregt_lambda_1992,geuvers_logics_1993}.
This formulation corresponds to intuitionistic higher-order logic.
However, this structure is lost in the $\mathrm{Q}_{0}$ formulation
used by the HOL systems. Our shallow embedding can be adapted to recover
this structure, and thus obtain a constructive and computational version
of HOL.

Instead of equality, we declare implication and universal quantification
as primitive connectives, and we define what provability means through
rewriting.
\[
\begin{array}{lll}
\const{imp} & : & \tterm{\left(\tarrow{\tbool}{\left(\tarrow{\tbool}{\tbool}\right)}\right)}\\
\const{forall} & : & \prodterm{\alpha}{\ttype}{\tterm{\left(\tarrow{\left(\tarrow{\alpha}{\tbool}\right)}{\tbool}\right)}}
\end{array}
\]
\[
\begin{array}{llll}
\left[p:\tterm{\tbool},q:\tterm{\tbool}\right] & \tproof{\left(\appterm{\appterm{\const{imp}}p}q\right)} & \rewrite & \tproof p\arrow\tproof q\\
\left[\alpha:\ttype,p:\tterm{\left(\tarrow{\alpha}{\tbool}\right)}\right] & \tproof{\left(\appterm{\const{forall}}p\right)} & \rewrite & \prodterm x{\tterm{\alpha}}{\tproof{\left(\appterm px\right)}}
\end{array}
\]
However, this time we do not even need to declare constants like $\const{Refl}$
and $\const{AppThm}$ for the derivation rules, because they are derivable.
Here is a derivation of the introduction and elimination rules for
implication for example:
\[
\begin{array}{lcl}
\const{imp\_intro} & : & \prodterm{p,q}{\tterm{\tbool}}{\left(\tproof p\arrow\tproof q\right)\arrow\tproof{\left(\appterm{\appterm{\const{imp}}p}q\right)}}\\
 & = & \lamterm{p,q}{\tterm{\tbool}}{\lamterm h{\left(\tproof p\arrow\tproof q\right)}h}\\
\const{imp\_elim} & : & \prodterm{p,q}{\tterm{\tbool}}{\tproof{\left(\appterm{\appterm{\const{imp}}p}q\right)}\arrow\tproof p\arrow\tproof q}\\
 & = & \lamterm{p,q}{\tterm{\tbool}}{\lamterm h{\tproof{\left(\appterm{\appterm{\const{imp}}p}q\right)}}{\lamterm x{\tproof p}{\appterm hx}}}
\end{array}
\]
By translating the introduction rules as $\lambda$-abstractions,
and the elimination rules as applications, we recover the reduction
of the proof terms, which corresponds to \emph{cut elimination} in
the original proofs.

As for equality, it is also possible to define it in terms of these
connectives. For example, we could use the Leibniz definition of equality,
which is the one used by Coq:
\[
\begin{array}{lcl}
\const{eq} & : & \prodterm{\alpha}{\ttype}{\tterm{\left(\tarrow{\alpha}{\left(\tarrow{\alpha}{\tbool}\right)}\right)}}\\
 & = & \lamterm{\alpha}{\ttype}{\lamterm x{\tterm{\alpha}}{\lamterm y{\tterm{\alpha}}{}}}\\
 &  & \appterm{\appterm{\const{forall}}{\left(\tarrow{\alpha}{\tbool}\right)}}{\left(\prodterm p{\tterm{\left(\tarrow{\alpha}{\tbool}\right)}}{\appterm{\appterm{\const{imp}}{\left(\appterm px\right)}}{\left(\appterm py\right)}}\right)}
\end{array}
\]
We would still need to assume some axioms to prove all the rules of
OpenTheory, namely $\const{FunExt}$ and $\const{PropExt}$ \cite{keller_importing_2010},
but at least this definition is closer to that of Coq. Since the $\lambda_{HOL}$
PTS is a strict subset of the calculus of inductive constructions,
we can adapt our translation to inject HOL directly into an embedding
of Coq in Dedukti \cite{boespflug_coqine:_2012}, or to combine HOL
proofs with Coq proofs in Dedukti \cite{assaf_mixing_2015}. Further
research into ways to eliminate these axioms (and thus maintain the
constructive aspect) when possible is the subject of ongoing work.

\section{\label{sec:conclusion}Conclusion}

We showed how to translate HOL to Dedukti by adapting techniques from
Cousineau and Dowek \cite{cousineau_embedding_2007} to define an
embedding that is sound, complete, and reduction preserving. Using
our implementation, we were able to translate the OpenTheory standard
library and verify it in Dedukti.

\subsubsection*{Future work}

The translation we have presented can be improved in several ways.
The current implementation suffers from a lack of linking: if a package
makes use of a type, constant, or theorem defined in another package,
we do not have a reference to the original definition. This is due
to a limitation of the OpenTheory article format. In OpenTheory, this
problem is resolved by adding a theory management layer, which is
responsible for composing and linking theories together \cite{hurd_opentheory_2011}.
It would be beneficial to integrate this layer in our translation
so that we can properly link the resulting files together.

While we used several optimizations including term sharing in our
implementation, there is still room for reducing the time and memory
consumption of the translation and the size of the generated files.
The caching techniques of Kaliszyk and Krauss \cite{kaliszyk_scalable_2013}
could be used in this regard to handle larger libraries and formalizations.

Finally, we can study how to combine the proofs obtained by this translation
with the proofs obtained from the translation of Coq \cite{boespflug_coqine:_2012}.
That will require a careful examination of the compatibility of the
two embeddings. First, the types of the two theories must coincide,
so that a natural number from HOL is the same as a natural number
from Coq for example. Second, we must make sure that the resulting
theory is consistent. For instance, we know that every type in HOL
is inhabited, which is inconsistent with the existence of empty types
in Coq, so we will need to modify the translations to avoid this.
A solution is to parameterize each HOL type variable by a witness
ensuring that it is non-empty. Our translation can be adapted for
this solution without much trouble. Some work has already been done
in this direction \cite{assaf_mixing_2015}.

\subsubsection*{Acknowledgments}

We thank Gilles Dowek for his support, as well as Mathieu Boespflug
and Chantal Keller for their comments and suggestions.

\bibliographystyle{eptcs}
\bibliography{biblio}

\end{document}